\documentclass[12pt, fleqn]{article}
\usepackage[cp1251]{inputenc}
\usepackage{latexsym,amsfonts,amssymb}
\usepackage{graphicx}

\usepackage{amsbsy}
\usepackage{amsmath}
\usepackage{epsf}
\usepackage{cite}
\usepackage{hyperref,color}

\newtheorem{theo}{Theorem}
\newtheorem{remark}{Remark}

\newcommand{\bt}{\begin{theo}}
\newcommand{\et}{\end{theo}}
\newcommand{\bd}{\begin{displaymath}}
\newcommand{\ed}{\end{displaymath}}

\newcommand{\lf}{\left}
\newcommand{\rg}{\right}

\newcommand{\be} {\begin{equation}}
\newcommand{\ee} {\end{equation}}
\newcommand{\ba}{\begin{array}}
\newcommand{\ea} {\end{array}}
\newcommand{\bea}{\begin{eqnarray}}
\newcommand{\eea} {\end{eqnarray}}

\newcommand{\p} {\partial}

\newcommand{\lbd} {\lambda}

\begin{document}

\begin{center}
{\Large \bf
 Nonlinear systems of PDEs \\
 admitting  infinite-dimensional Lie algebras \\
 and their connection with Ricci flows. \\ II: The two-dimensional space case.  }

\medskip

{\bf Roman Cherniha $^{a,b,}$\footnote{\small  Corresponding author.
E-mail: r.m.cherniha@gmail.com; roman.cherniha1@nottingham.ac.uk}
and John R. King$^{a}$ }

 $^{a}$ \quad School of Mathematical Sciences, University of Nottingham,\\
  University Park, Nottingham NG7 2RD, UK \\
$^{b}$ \quad National University of Kyiv-Mohyla Academy,\\
2, Skovoroda Street, Kyiv  04070, Ukraine.\\

\end{center}

\begin{abstract}
The work is a natural continuation of that published
in {\it Stud Appl Math. 2024; 153:e12737}.  All possible
two-components evolutions systems of (1+2)-dimensional second-order
PDEs admitting an infinite-dimensional Lie algebra are constructed.
It is shown that a natural generalisation of this Lie algebra to the
higher-dimensional case does not lead to a more general result
because the infinite-dimensional symmetry is broken.  The recently
derived system, which is related to Ricci flows, is identified as a
very particular case among the evolution systems obtained.  All
possible stationary solutions of this system in the
radially symmetric case are constructed using the surprisingly rich
Lie algebra  of the reduced system of ODEs. Moreover, it is proved
that this Lie algebra  is reducible to the fifteen-dimensional
algebra of the simplest system of two second-order ODEs.  Several
time-dependent exact solutions in the radially symmetric case are
constructed as well. It is shown that the solutions obtained are
bounded and smooth provided  arbitrary parameters are
correctly specified.
\end{abstract}

\textbf{Keywords:}   nonlinear evolution system; Ricci flow; Lie symmetry; exact
solution.

\section{\bf  Introduction} \label{sec:1}

In this work, we are concerned with  evolutionary (1+2)-dimensional
systems admitting the infinite-dimensional Lie algebra generated by
the Lie symmetry
\begin{equation}\label{1-1}
X_{\infty}=\xi^1\partial_{x_1}+ \xi^2\partial_{x_2}+ \xi^1_{x_1}I,
\end{equation}
where the functions $\xi^1(x_1,x_2)$ and $\xi^2(x_1,x_2)$ form   an
arbitrary  solution of the Cauchy-Riemann system
\begin{equation}\label{1-2}  \xi^1_{x_1}=\xi^2_{x_2}, \ \xi^1_{x_2}=-\xi^2_{x_1} \end{equation}
 and $I$ is a so-called unit operator to be specified in the next section.

 One may consider this study as a natural continuation of our previous
 work \cite{ch-ki-24}, in which the systems were studied in the
  case of a single space variable, our focus being on the identification of
   two-dimensional (in space) evolution
   systems possessing an  infinite-dimensional Lie symmetry.
   Actually, our work \cite{ch-ki-24} was inspired by a recent
   result, which says that
  a well-known 1D PDE system, related to the Ricci flows \cite{an-is-knopf-2011,
  an-is-knopf-2015},
  is invariant with respect to an infinite-dimensional Lie algebra \cite{lo-dimas-bo-2023}.
  Here we show that a generalisation of this system to 2D space  case still
  admits the infinite-dimensional Lie algebra generated by the operator $X_{\infty}$;
   however, there is no such
  a  Lie symmetry in the 3D  and higher cases.


There are limited known examples of evolutionary systems admitting
an infinite-dimensional Lie algebra generated by  (\ref{1-1}). For
example, a two-component system, which is a generalisation of the
well-known fast diffusion equation
\begin{equation}\label{1-3} u_{t}=\nabla \cdot(u^{-1}\nabla u), \end{equation}
reads as
\begin{equation}\label{2-1} \ba{l} \medskip
u_t = d_1\nabla \cdot(u^{-1}\nabla u) + u f(u/v), \\
  v_t =    d_2\nabla \cdot(v^{-1}\nabla v) +v g(u/v),
  \ea \end{equation}
where $\nabla=(\frac{\p}{\p x_1}, \frac{\p}{\p x_2})$,  the dot
denotes the  scalar product and $f$ and $g$ are arbitrary smooth
functions. Any system of the form (\ref{2-1})  admits the
infinite-dimensional Lie algebra   (\ref{1-1}) with $I= -2(u\p_u+
v\p_{v})$ \cite{ch-ki-06}. Notably, the infinite-dimensional Lie
algebra of invariance of equation (\ref{1-3}) was identified for the
first time  in \cite{na-70} and   was much later  rediscovered by other
authors (see, e.g., \cite{cim-con-2006}) because this equation  describes  Ricci flow on a
two-dimensional manifold \cite{das-hamilton-2004}. Scalar
(1+2)-dimensional equations with reactive and convective terms
admitting infinite-dimensional symmetry are identified in
\cite{ch-se-pr-2020}. Another example can be identified from the
recent study \cite{lo-dimas-bo-2023} (see (46) with $n=2$ therein),
which leads to the Lie symmetry (\ref{1-1}) with $I= u\p_u.$ We
consider this example in detail in Section 4 below.

Taking into account the examples presented above, one may suggest
the following form for the unit operator
\begin{equation}\label{2-3} I = \lbd_1 u\p_u+ \lbd_2
v\p_{v}, \end{equation}
 where $\lbd_1$ and $\lbd_2$ are given constants.
It turns out that the above  constants  are reducible to
$\lbd_1=\lbd_2=1$  by appropriate local transformations. Indeed,
assuming  $\lbd_1\lbd_2\not=0$, one easily calculates that
(\ref{2-3}) reduces to
\begin{equation}\label{2-4} I =  u^*\p_u^*+
v^*\p_{v^*} \end{equation} by the transformation $u^*=u^{1/\lbd_1},
\  v^*=v^{1/\lbd_2}$. If  $\lbd_1\lbd_2=0$ and, for example,
$\lbd_1\not=0$ then the  transformation $u^*=u^{1/\lbd_1}, \
v^*=u^{1/\lbd_1}v$ again leads to the form  (\ref{2-4}).  Thus, we
assume $\lbd_1=\lbd_2=1$  in what follows without losing 
generality. In particular, applying the transformation
\[ u = u^{*-2}, \ v = v^{*-2} \]
to the fast diffusion system (\ref{1-1}), one readily obtains the
system
\begin{equation}\label{2-5} \ba{l} \medskip
u^*_t = d_1\Big(u^{*2}\Delta u^* - u^*|\nabla u^*|^2\Big) + u^* f^*(u^*/v^*), \\
  v^*_t =    d_2\Big(v^{*2}\Delta v^* - v^*|\nabla v^*|^2\Big) + v^* g^*(u^*/v^*),
  \ea \end{equation}
where $f^*=-\frac{1}{2} f(v^{*2}/u^{*2})$ and $g^*=-\frac{1}{2}
g(v^{*2}/u^{*2})$. The nonlinear system  (\ref{2-5})  admits the
Lie symmetry (\ref{1-1}) with (\ref{2-4}).

The main aim of this study is to describe in an explicit form all
two-component evolutionary systems of the form (\ref{3-1}) below
admitting the infinite-dimensional  Lie algebra generated by the Lie
symmetry (\ref{1-1}). This is an important problem in Lie symmetry
analysis of PDEs because
 classical mathematical models
arising in physics (the heat and wave equations, Navier-Stokes
equations, the Maxwell  equations, etc.) admit non-trivial Lie
algebras of invariance,  which often reflect the fundamental laws of
physics. Thus, constructing  generalizations of these  equations
(systems of PDEs)  that   preserve the relevant   symmetries
is one of the most important problems in Lie symmetry analysis. Many
such  problems were solved  in past: in particular,   wide classes
of nonlinear equations (nonlinear systems of PDEs) admitting Galilei
algebra, Poincare algebra, Euclid algebra, conformal algebra,
Schr\"odinger algebra etc.  have been constructed (see an extensive
list of references in \cite{ch-ki-24}). However, there are not many
papers devoted to solving such  problems with respect to
infinite-dimensional Lie algebras,
 though some  examples can be found in \cite{fss},  \cite{ch-2001},\cite{ch-he-2004}.
 A  most recent example was derived in \cite{ch-ki-24}.

 The following sections are  organized as follows. In Section~\ref{sec:3}, a wide
class of two-component evolution systems  consisting of
(1+2)-dimensional PDEs  is constructed admitting the
infinite-dimensional Lie algebra generated by the operator
(\ref{1-1}). Notably, there are no linear evolution systems
belonging to the class. In Section~\ref{sec:4}, it is identified
that the nonlinear system with two space variables  derived in
\cite{lo-dimas-bo-2023} can be obtained from the above mentioned
class as a very particular case. We also discuss Lie symmetries of
the two-component system and that with two additional equations,
which form the Cauchy-Riemann system in an unusual form. It is shown
that the conditional symmetry concept \cite{fss} is helpful  in
order to explain how Lie symmetry depends on the  number (two, three and
four)  of PDEs in the system in question. In Section~\ref{sec:5},
radially symmetric solutions in closed form are constructed for the
two-component nonlinear system related to the Ricci flows. In
particular, all possible  stationary solutions  are constructed
using a surprisingly rich Lie symmetry of the reduced system of
ODEs. Several time-dependent solutions are derived as well. Finally,
we discuss the results obtained and present some conclusions  in
Section~\ref{sec:6}. In particular, it is shown that a direct
generalisation of the Lie symmetry (\ref{1-1}) to three and more
space variables does not lead to evolution systems admitting
infinite-dimensional Lie algebras.

\section{\bf Main result about evolution systems with infinite-dimensional Lie symmetries } \label{sec:3}

We consider two-component evolution systems involving   derivatives
of the first order  in time and up to the second  order of space
variables  of the general form
\begin{equation}\label{3-1} \ba{l} \medskip
u_t= F(u, v,  \underset{1}{u}, \underset{1}{v}, \underset{11}{u}, \underset{11}{v}),\\
v_t= G(u, v, \underset{1}{u}, \underset{1}{v}, \underset{11}{u},
\underset{11}{v} ), \ea \ee where $F$ and $G$ are arbitrary
differentiable functions of their arguments and the following
notations are used: $\underset{1}{u}= (u_{x_1}, u_{x_2}), \
\underset{11}{u}= (u_{x_1x_1}, u_{x_1x_2}, u_{x_2x_2})$,
$\underset{1}{v}= (v_{x_1}, v_{x_2}), \ \underset{11}{v}=
(v_{x_1x_1}, v_{x_1x_2}, v_{x_2x_2})$.

In what follows we assume that both of the above equations are of
second-order and the system is not reducible to one involving a
first-order PDE.


\begin{theo} \label{th1}
An evolution system of the from (\ref{3-1}) admits   the
infinite-dimensional  Lie algebra generated by the Lie symmetry
\begin{equation}\label{3-2}
X_{\infty}=\xi^1\partial_{x_1}+ \xi^2\partial_{x_2}+
\xi^1_{x_1}(u\p_u+ v\p_{v}),
\end{equation}
 if and only if this system possesses the form
\begin{equation}\label{3-3} \ba{l}
u_t= u f\Big(I_0, \ I_1,\ I_{11}, \  I_{22}, \ J \Big),\\
v_t= v g\Big( I_0, \ I_1,\ I_{11}, \  I_{22}, \ J \Big), \ea
\end{equation} where $f$ and $g$ are arbitrary differentiable
functions of their arguments, while
\begin{equation}\label{3-3a} \ba{l}  I_0=\frac{v}{u}, \
I_1=\Big|\frac{v}{u}\nabla u - \nabla v\Big|^2, \ I_{11}= u\Delta u-
|\nabla u|^2,  \ I_{22}= v\Delta v- |\nabla v|^2, \\
   J= \Big( v(u_{x_1x_1}-u_{x_2x_2})- u(v_{x_1x_1}-v_{x_2x_2})\Big)^2 +4\Big(vu_{x_1x_2}-uv_{x_1x_2}\Big)^2.
   \ea
\end{equation}
\end{theo}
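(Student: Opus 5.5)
We apply the classical Lie infinitesimal invariance criterion to the operator (\ref{3-2}). For arbitrary holomorphic data $(\xi^1,\xi^2)$, we prolong it to first and second order in $x_1,x_2$, with $t$ and $u_t,v_t$ also handled. Write $\eta=\xi^1_{x_1}$; by (\ref{1-2}), $\eta$ is harmonic and $\xi^1_{x_2}=-\xi^2_{x_1}$. The coefficients of $\p_u,\p_v$ are $\eta u,\eta v$, and $\xi^i$ do not depend on $t$. Hence the prolonged coefficient of $\p_{u_t}$ is simply $\eta u_t$. The conditions therefore read
\[
X^{(2)}_\infty F=\eta F,\qquad X^{(2)}_\infty G=\eta G
\]
on the manifold (\ref{3-1}).

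We compute the prolongation coefficients explicitly. For first derivatives, $\zeta^{u_{x_1}}=\eta u_{x_1}+\eta_{x_1}u-\xi^1_{x_1}u_{x_1}-\xi^2_{x_1}u_{x_2}$, and similarly for $u_{x_2}$. These combine into an infinitesimal rotation of $\nabla u$ by angle $\xi^2_{x_1}$ plus a term $u\nabla\eta$. The second-derivative coefficients involve $\eta_{x_ix_j}u$, $\eta_{x_i}u_{x_j}$, a scaling $-\eta\,u_{x_ix_j}$ (from $+\eta-2\eta$), and a rotation acting on the Hessian. The essential point is that, at a fixed point, the jets $\xi^1,\xi^2,\eta,\xi^2_{x_1},\eta_{x_1},\eta_{x_2},\eta_{x_1x_1},\eta_{x_1x_2}$ of an arbitrary holomorphic function can be chosen independently. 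The constraint $\eta_{x_2x_2}=-\eta_{x_1x_1}$ comes from harmonicity, and second derivatives of $\xi^2_{x_1}$ also appear. Splitting the invariance condition with respect to these independent parameters gives a linear first-order system for $F$ (and the same for $G$):
\begin{itemize}
\item translations give independence from $x_1,x_2$;
\item $\eta$ gives a scaling condition under which $u,v$ have weight $1$, first derivatives weight $0$, second derivatives weight $-1$, and $F$ has weight $1$;
\item $\xi^2_{x_1}$ gives rotation invariance in the $(x_1,x_2)$-plane;
\item $\eta_{x_1},\eta_{x_2}$ give invariance under $\nabla u\mapsto\nabla u+u\,a$, $\nabla v\mapsto\nabla v+v\,a$ together with the induced Hessian shifts;
\item $\eta_{x_1x_1},\eta_{x_1x_2}$, acting only through trace-free combinations because $\Delta\eta=0$, give invariance under $u_{x_ix_j}\mapsto u_{x_ix_j}+u\,b_{ij}$ with $b$ trace-free symmetric, and likewise for $v$.
\end{itemize}
For the "only if" direction we then integrate this system of linear PDEs by characteristics and find the basis of invariants of the resulting finite-dimensional group action on the jet space.

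Counting dimensions: the jet variables number $2+4+6=12$. The group has the following generators:
\begin{itemize}
\item 1 scaling;
\item 1 rotation;
\item 2 gradient shifts $a$;
\item 2 trace-free Hessian shifts $b$.
\end{itemize}
If it acts generically freely, there are $12-6=6$ invariants, but the group also mixes first and second derivatives. We build invariants step by step. $I_0=v/u$ is clearly invariant. The combination $\frac{v}{u}\nabla u-\nabla v$ is invariant under the $a$-shift and rotates as a vector with weight $1$. For the Hessians, $b$ is trace-free, so $u\Delta u$ shifts only by terms cancelled through $-|\nabla u|^2$; this gives $I_{11},I_{22}$. The trace-free part of $vD^2u-uD^2v$ is unaffected by $b$, and checking the $a$-terms it is also unaffected by $a$. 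Its squared norm under rotation is $J$.

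The weights are $I_0\!:0$, $I_1\!:2$, $I_{11}\!:0$, $I_{22}\!:0$, $J\!:0$. Invariance of $u^{-1}F$ then yields $F=uf(I_0,\dots,J)$. Note that $I_1$ has nonzero weight $2$ with respect to the scaling generated by $\eta$, so it cannot enter freely. This suggests either combining it with a weight-zero quantity or that I have misassigned a weight. The main obstacle I expect is precisely this bookkeeping: getting the correct weights from the second prolongation and checking that the $\eta_{x_i}$-shift of second derivatives is exactly compensated in $I_{11},I_{22},J$. Once the weights are fixed, I would verify by a direct rank computation that the invariants listed in (\ref{3-3a}) are functionally independent and complete, counting one extra invariant if the orbit dimension drops. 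Any mismatch with the list would be resolved by recomputing the scaling weight, since $u\Delta u$ has the same weight as $|\nabla u|^2$.

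For the "if" direction, we verify directly that each of $I_0,I_1,I_{11},I_{22},J$ is annihilated by $X^{(2)}_\infty$, and that $X^{(2)}_\infty(uf)=\eta\,uf$. This completes the proof.
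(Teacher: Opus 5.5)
Your route is the same as the paper's. You apply the invariance criterion and split with respect to the independent $3$-jet of the holomorphic function, which gives exactly the six operators of (\ref{3-16}): a scaling, a rotation, two gradient shifts and two trace-free Hessian shifts. You then integrate that linear system. The ``weight $2$'' obstacle you flag is only a slip. With your own (correct) weights, $\frac{v}{u}$, $\nabla u$ and $\nabla v$ all have weight $0$, so $q:=\frac{v}{u}\nabla u-\nabla v$ and $I_1=|q|^2$ have weight $0$. Hence $I_0,I_1,I_{11},I_{22},J$ are all absolute invariants, and the ``if'' direction goes through as you describe.

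The genuine problem is the completeness step, and your count $12-6=6$ is what exposes it. It should not be explained away: if the orbit dimension dropped you would get \emph{more} invariants, never fewer.
\begin{itemize}
\item The gradient shifts move $(\nabla u,\nabla v)$ only in the direction $(ua,va)$, while the rotation does not when $q\neq0$. So the six operators have rank $6$ wherever $u\neq0$ and $q\neq0$, and there are exactly six functionally independent absolute invariants, not five.
\item The sixth is lost at your sentence ``its squared norm under rotation is $J$''. The rotation acts on the four quantities $(q_1,q_2,T_1,T_2)$, where $T_1=v(u_{x_1x_1}-u_{x_2x_2})-u(v_{x_1x_1}-v_{x_2x_2})$ and $T_2=2(vu_{x_1x_2}-uv_{x_1x_2})$. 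Here $q$ turns by angle $\theta$ and $(T_1,T_2)$ by $2\theta$, so three invariants remain: $I_1$, $J=T_1^2+T_2^2$, and the coupling $K=(q_1^2-q_2^2)T_1+2q_1q_2T_2$.
\item One checks directly that $K$ is annihilated by all six operators. Yet $K$ is not a function of $I_0,\dots,J$. Take $u=v=1$, $\nabla u=(1,0)$, $\nabla v=0$, $D^2v=0$, $u_{x_1x_1}=-u_{x_2x_2}=\alpha$, $u_{x_1x_2}=\beta$. Then $I_0=I_1=1$, $I_{11}=-1$, $I_{22}=0$ and $J=4(\alpha^2+\beta^2)$, but $K=2\alpha$.
\item Consequently the system $u_t=u(I_{11}+K)$, $v_t=vI_{22}$ admits (\ref{3-2}) but is not of the form (\ref{3-3}).
\end{itemize}
So the ``only if'' part cannot be proved as stated, and your proposed rank/completeness check would fail. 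The paper's proof has the same gap: it simply asserts that integration by characteristics yields $uf(I_0,I_1,I_{11},I_{22},J)$. What your argument actually establishes is the corrected statement, with $K$ added as a sixth argument of $f$ and $g$.
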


\textbf{Proof. } This  is based on the classical  invariance
criterium  that was worked out by Sophus Lie and can be found in any
book devoted to the Lie symmetry analysis  (see, e.g., \cite{olv-93,
bl-anco-10,ch-se-pl-book}). The main difficulty in application of
the criteria to the evolutionary  system (\ref{3-1}) follows from
the structure of the Lie symmetry (\ref{3-2}) because the latter
involves two otherwise  arbitrary functions satisfying  the
Cauchy-Riemann system (\ref{1-2}).

First of all, we write down the criteria for (\ref{3-1}):
\begin{equation}\label{3-11}\begin{array}{l}
\mbox{\raisebox{-1.6ex}{$\stackrel{\displaystyle  
X}{\scriptstyle 2}$}}\, \lf(u_{t}- F(u, v,  \underset{1}{u}, \underset{1}{v}, \underset{11}{u}, \underset{11}{v})\rg)\Big\vert_{{\cal{M}}}  
  =0,
\\  \mbox{\raisebox{-1.6ex}{$\stackrel{\displaystyle  
X}{\scriptstyle 2}$}}\, \lf(v_{t}- G(u, v,  \underset{1}{u}, \underset{1}{v}, \underset{11}{u}, \underset{11}{v})\rg)\Big\vert_{{\cal{M}}}  
  =0, \end{array}\end{equation} where the  manifold
     \[{\cal{M}}=
\lf\{u_{t}- F(u, v,  \underset{1}{u}, \underset{1}{v},
\underset{11}{u}, \underset{11}{v})=0, \  v_{t}- F(u, v,
\underset{1}{u}, \underset{1}{v}, \underset{11}{u},
\underset{11}{v})=0 \rg\}\] is considered in the prolonged space of
independent variables
\[
t, \, x_1,\, x_2, \,u, \,v,\, u_t, \,v_t, \, \underset{1}{u}, \
\underset{1}{v},\, u_{tt}, \,v_{tt},  \,u_{tx_1}, \,v_{tx_1},
\,u_{tx_2}, \,v_{tx_2}, \, \underset{11}{u}, \ \underset{11}{v}.\]

The operator $\mbox{\raisebox{-1.6ex}{$\stackrel{\displaystyle  
X}{\scriptstyle 2}$}} $ is the second prolongation of the operator
$X$ and has the structure
 \begin{equation}\label{3-12}\begin{array}{l} \medskip
\mbox{\raisebox{-1.6ex}{$\stackrel{\displaystyle  
X}{\scriptstyle 2}$}} =X+\rho^1_t\frac{\p}{\p
u_t}+\rho^2_t\frac{\p}{\p v_t}+\rho^1_{x_i}\frac{\p}{\p u_{x_i}}+
\rho^2_{x_i}\frac{\p}{\p v_{x_i}}+ \\
\qquad   +\sigma^1_{tt}\frac{\p}{\p
u_{tt}}+\sigma^2_{tt}\frac{\p}{\p v_{tt}} + \sigma^1_{t
x_i}\frac{\p}{\p u_{tx_i}}+\sigma^2_{tx_i}\frac{\p}{\p v_{tx_i}}
+\sigma^1_{x_ix_j}\frac{\p}{\p
u_{x_ix_j}}+\sigma^2_{x_ix_j}\frac{\p}{\p v_{x_ix_j}},
\end{array}\end{equation} where  summation over the repeated
indices $i=1,2$ and $j=1,2$ is assumed and $i\leqq j$.
 The
coefficients $\rho^k$ and $\sigma^k$ ($k=1,2$) with
subscripts are defined by the coefficients of the operator $X$ via
well-known formulae. Note that we do not need to calculate
$\sigma^k_{tt}$ and $\sigma^k_{t x_i}$ because the class of PDEs
(\ref{3-1}) does not contain
  second-order time derivatives or  second-order mixed derivatives involving the time variable.
  As a result,  the coefficients
  can be determined explicitly  in the form:
 \begin{equation}\label{3-13}\begin{array}{l} \medskip
\rho^1_t= \xi^1_{x_1}u_t, \ \rho^2_t= \xi^1_{x_1}v_t, \\
 \medskip
 \rho^1_{x_1}= \xi^1_{x_1x_1}u + \xi^1_{x_2}u_{x_2}, \
\rho^1_{x_2}= \xi^1_{x_1x_2}u - \xi^1_{x_2}u_{x_1}, \\
\medskip
 \rho^2_{x_1}= \xi^1_{x_1x_1}v + \xi^1_{x_2}v_{x_2}, \
\rho^2_{x_2}= \xi^1_{x_1x_2}v - \xi^1_{x_2}v_{x_1}, \\
 \medskip
\sigma^1_{x_1x_2}= \xi^1_{x_1x_1x_2} u - \xi^1_{x_1}u_{x_1x_2} + \xi^1_{x_2}(u_{x_2x_2}-u_{x_1x_1}), \\
 \medskip
\sigma^2_{x_1x_2}= \xi^1_{x_1x_1x_2} v - \xi^1_{x_1}v_{x_1x_2} + \xi^1_{x_2}(v_{x_2x_2}-v_{x_1x_1}), \\
 \medskip
 \sigma^1_{x_1x_1}= \xi^1_{x_1x_1x_1} u+ \xi^1_{x_1x_1}u_{x_1} + \xi^1_{x_1x_2}u_{x_2}- \xi^1_{x_1}u_{x_1x_1} + 2\xi^1_{x_2}u_{x_1x_2}, \\
 \medskip
 \sigma^2_{x_1x_1}= \xi^1_{x_1x_1x_1} v+ \xi^1_{x_1x_1}v_{x_1} + \xi^1_{x_1x_2}v_{x_2}- \xi^1_{x_1}v_{x_1x_1} + 2\xi^1_{x_2}v_{x_1x_2}, \\
 \medskip
 \sigma^1_{x_2x_2}= \xi^1_{x_1x_2x_2} u+ \xi^1_{x_1x_1}u_{x_1} + \xi^1_{x_1x_2}u_{x_2}- 2\xi^1_{x_2}u_{x_1x_2}  -\xi^1_{x_1}u_{x_2x_2}, \\
 \medskip
  \sigma^2_{x_2x_2}= \xi^1_{x_1x_2x_2} v+ \xi^1_{x_1x_1}v_{x_1} + \xi^1_{x_1x_2}v_{x_2}- 2\xi^1_{x_2}v_{x_1x_2}  -\xi^1_{x_1}v_{x_2x_2}.
\ea\end{equation}

  At the next stage, one should insert the explicit expressions from (\ref{3-13})  into (\ref{3-12}) and  apply the second prolongation of
$X$ to the invariance criteria  (\ref{3-11}). Note that  the
manifold ${\cal{M}}$ can be taken into account by excluding the time
derivatives using the system in question, i.e. (\ref{3-1}). As a
result, two very  cumbersome  expressions are obtained (they are
omitted here) and each of them must vanish for arbitrary solutions
of the  Cauchy-Riemann system (\ref{1-2}).  Obviously, the latter
possesses an infinite number of linearly independent solutions,
therefore both expressions obtained can be split  with respect to
the different derivatives of the function  $\xi^1(x_1,x_2)$  (one
may note that the function $\xi^2(x_1,x_2)$ is eliminated using
(\ref{1-2})). Thus, exactly six first-order PDEs  are obtained to
find the function $F$ and the same equations are obtained for  $G$.
At the final stage, one needs to solve the six-component system

 \begin{equation}\label{3-16}\begin{array}{l} \medskip
u \frac{\p F}{\p u}+ v \frac{\p F}{\p v} - u_{x_1x_1}\frac{\p F}{\p u_{x_1x_1}}- u_{x_2x_2}\frac{\p F}{\p u_{x_2x_2}}- u_{x_1x_2}\frac{\p F}{\p u_{x_1x_2}} \\
\medskip
\qquad - v_{x_1x_1}\frac{\p F}{\p v_{x_1x_1}} - v_{x_2x_2}\frac{\p F}{\p u_{x_2x_2}}- v_{x_1x_2}\frac{\p F}{\p u_{x_1x_2}} =F,\\
\medskip
u_{x_2}\frac{\p F}{\p u_{x_1}}-u_{x_1}\frac{\p F}{\p u_{x_2}}+
v_{x_2}\frac{\p F}{\p v_{x_1}}-v_{x_1}\frac{\p F}{\p v_{x_2}}
+2u_{x_1x_2}\Big(\frac{\p F}{\p u_{x_1x_1}}-\frac{\p F}{\p u_{x_2x_2}}\Big)\\
\medskip
+2v_{x_1x_2}\Big(\frac{\p F}{\p v_{x_1x_1}}-\frac{\p F}{\p
v_{x_2x_2}}\Big)+ (u_{x_2x_2}-u_{x_1x_1})\frac{\p F}{\p u_{x_1x_2}}
+ (v_{x_2x_2}-v_{x_1x_1})\frac{\p F}{\p v_{x_1x_2}}=0,
\\
\medskip
u\frac{\p F}{\p u_{x_1}}+v_{x_1}\frac{\p F}{\p v_{x_1}}
+u_{x_1}\Big(\frac{\p F}{\p u_{x_1x_1}}+\frac{\p F}{\p
u_{x_2x_2}}\Big)
+v_{x_1}\Big(\frac{\p F}{\p v_{x_1x_1}}+\frac{\p F}{\p v_{x_2x_2}}\Big)=0, \\
\medskip
u\frac{\p F}{\p u_{x_2}}+v_{x_1}\frac{\p F}{\p v_{x_2}}
+u_{x_2}\Big(\frac{\p F}{\p u_{x_1x_1}}+\frac{\p F}{\p
u_{x_2x_2}}\Big)
+v_{x_2}\Big(\frac{\p F}{\p v_{x_1x_1}}+\frac{\p F}{\p v_{x_2x_2}}\Big)=0, \\
\medskip
u\Big(\frac{\p F}{\p u_{x_1x_1}}-\frac{\p F}{\p u_{x_2x_2}}\Big)+v\Big(\frac{\p F}{\p v_{x_1x_1}}-\frac{\p F}{\p v_{x_2x_2}}\Big)=0, \\
\medskip
u\frac{\p F}{\p u_{x_1x_2}} + v\frac{\p F}{\p v_{x_1x_2}}=0.
\ea\end{equation}
 (\ref{3-16}) is an overdetermined system  of the linear first-order PDEs for the function $F$.
 Each equation can be solved using
the well-known method of characteristics. However, a nontrivial
technical problem occurs because, for example, solving the first
equation from  (\ref{3-16}), one obtains its general solution as an
arbitrary functions of new variables. Each new variable is a known
function of the initial variables. So, in order to solve the next
equation from    (\ref{3-16}), one needs to write down this equation
in new variables and so on. This process in nontrivial because the new successive 
variables have more complicated structure. In order to avoid ugly
expressions, it is necessary to find the simplest structure of new
variables on each step. 

Finally, the general solution of the overdetermined system
(\ref{3-16}) was found in the form
\[ F = u f\Big(I_0, \ I_1,\ I_{11}, \  I_{22}, \ J \Big), \]
where $f$ is an arbitrary differentiable of five arguments that have
exactly the same form as in Theorem 1. The function $G$ has the same
structure and can be presented in the form
 \[ G = v g\Big(I_0, \ I_1,\ I_{11}, \  I_{22}, \ J \Big). \]

The proof is completed.

  $\blacksquare$

\begin{remark}
The expressions $I_0$ and $ I_1$  are absolute differential
invariants of  order zero and one, respectively, while $ I_{11}, \
I_{22}$ and $ J $ are absolute differential invariants of the second
order. Because  absolute differential invariants are defined only up
to functional dependence, each of the above invariants can be
presented in equivalent forms. For example the first order invariant
$I_1$ can
 be replaced by the symmetric form $ \frac{v}{u}|\nabla u|^2 -2\nabla u\cdot\nabla v+ \frac{u}{v}|\nabla v|^2$.
\end{remark}

It should be noted that a large majority of mathematical models
arising in real-world applications involve the second derivatives
only linearly. Thus, the class of governing  systems for such models
admitting the infinite-dimensional  Lie algebra (\ref{3-2}) reads as
\begin{equation}\label{3-3*} \ba{l}
u_t= u f_{11}(I_0, \ I_1)(u\Delta u- |\nabla u|^2) +u f_{12}(I_0, \
I_1)(v\Delta v-
|\nabla v|^2) +uf_{10}(I_0, \ I_1),\\
v_t= vf_{21}(I_0, \ I_1)(u\Delta u- |\nabla u|^2) +v f_{22}(I_0, \
I_1)(v\Delta v- |\nabla v|^2) +vf_{20}(I_0, \ I_1) , \ea
\end{equation} where $f_{ij}$ are arbitrary smooth
functions of the absolute invariants $I_0$ and $ I_1$. In the  very
particular case, $f_{ii}=d_i, \ i=1,2,$  $f_{12}=f_{21}=0$,
$f_{10}=f(I_0)$ and $f_{10}=g(I_0)$, one immediately arrives at
system (\ref{2-5}), which is equivalent to the fast diffusion system
(\ref{2-1}).  Notably, the class of PDEs (\ref{3-3*}) does not
contain {\it linear}  systems of  PDEs.

\begin{remark}
Because  the Lie symmetry (\ref{3-2}) does not involve the variable
$t$, similar systems  for  wave and telegraph type equations
admitting the same infinite-dimensional  Lie algebra  can
immediately be written down.  For example, in the case of systems
involving wave type equations, the analogue of (\ref{3-3})  reads as
  \begin{equation}\label{3-17} \ba{l}
u_{tt}= u f\Big(I_0, \ I_1,\ I_{11}, \  I_{22}, \ J \Big),\\
v_{tt}= v g\Big( I_0, \ I_1,\ I_{11}, \  I_{22}, \ J \Big). \ea
\end{equation}
\end{remark}


Finally, we highlight the following problem that naturally arises from this study.
A natural generalization of the Lie symmetry    (\ref{1-1})  to the  multidimensional case
reads as
\begin{equation}\label{6-1}
X_{\infty}=\xi^1\partial_{x_1}+\dots + \xi^n\partial_{x_n}+ \xi^1_{x_1}I, \quad n\geqq 3.
\end{equation}
where the functions $\xi^i(x_1,\dots, x_n)$ should satisfy the following generalisation
of the Cauchy-Riemann system
\begin{equation}\label{6-2}  \xi^1_{x_1}=\xi^i_{x_i}, \ \xi^i_{x_j}=-\xi^j_{x_i}, \quad i,j= 1,\dots, n, \ i<j.\end{equation}
It can be easily calculated that  (\ref{6-2}) is an overdetermined system of PDEs that consists of
 $\frac{1}{2}(n-1)(n+2)$ equations.
 Having this in hand,  it can be proved by using differential consequences that all third derivatives
 $\xi^1_{x_ix_jx_k}=0, \dots, \xi^n_{x_ix_jx_k}=0$ $(i,j,k= 1,\dots, n)$.
 So,  the functions $\xi^i(x_1,\dots, x_n)$ are quadratic polynomials, therefore
 a finite-dimensional Lie algebra that is nothing else but the conformal algebra is obtained (absolute differential
 invariants of this algebra for scalar PDEs can be found in \cite{fu-ye-1992}).
 In other words,  the Lie symmetry    (\ref{6-1}) does not generate an infinite-dimensional
Lie algebra. This means that another generalisation of  the Lie
symmetry (\ref{1-1}) would need to  be identified in order to
generalise the results obtained here to multidimensional cases
$n\geqq 3$.


\section{\bf  Systems related  to the  Ricci flow for specific metrics } \label{sec:4}

Recently \cite{lo-dimas-bo-2023},
  Ricci flow on warped product manifolds with a given metric
has been investigated by the Lie symmetry method. As a result,
 a complicated nonlinear system of PDEs was derived.
  In the simplest case of reduction to a single space variable, the
system was studied in order to use its Lie symmetry for constructing
exact solutions. The system
 can be written in the form (see (46) with $n=1$ in
\cite{lo-dimas-bo-2023})
\begin{equation}\label{4-1} \ba{l}
(vw)_t= (1-m)v^2\Big(vw_x\Big)_x + \frac{v}{w}\Big((m-1)v^2 w_x^2 -\mu\Big),\\
v_t= -m\frac{v^2}{w}\Big(vw_x\Big)_x, \ea \end{equation} where $m
=1,2,3...$ and $\mu$ is a real parameter (which can  also  be  a
function of time). Notably,  the nonlinear  system (\ref{4-1}) can
also be derived from  earlier studies
\cite{an-is-knopf-2011,an-is-knopf-2015}. In \cite{ch-ki-24}, it was
shown how  (\ref{4-1}) can be identified as a very particular case
from the general class of two-component systems of PDEs admitting an
infinite-dimensional Lie algebra.

It turns out that one arrives at a much more complicated systems if
the number of space variables is two or more, i.e. $n>1$ (in the
notation of \cite{lo-dimas-bo-2023}). Here we consider the case
$n=2$ in detail (though give some  discussion about the cases $n>2$
in the end of this study). The system of equations (46) with $n=2$
\cite{lo-dimas-bo-2023}  can be presented in the form
\begin{equation}\label{4-2} \ba{l}
ww_t= wv^2 \Delta w + (m-1)v^2|\nabla w|^2 -\mu,\\
v_t= v^2 \Delta v- v|\nabla v|^2-\frac{mv^3}{2w}\Delta w,\\
(v^2 w_{x_1})_{x_1}= (v^2 w_{x_2})_{x_2}, \\
(v^2 w_{x_1})_{x_2}= -(v^2 w_{x_2})_{x_1}.
 \ea \end{equation}
 One notes that this is an overdetermined four-component system of
 PDEs. The first two PDEs are standard evolution equations and their
 structure is similar to that in the one-dimensional  case. The
 last  two equations form the Cauchy-Riemann system  for the
 functions  $A=v^2 w_{x_1}$  and $B=v^2 w_{x_2}.$

Let us show that the first two PDEs from (\ref{4-2}) form a system
that is nothing else but a very particular case of the   general
system (\ref{3-3}): therefore this system indeed admits the
infinite-dimensional  Lie algebra (\ref{3-2}). Indeed, applying the
transformation $u=wv$, the first two equations of (\ref{4-2}) reduce
to the form
\begin{equation}\label{4-3} \ba{l}
u_t= u\Big(v \Delta v- |\nabla v|^2 + (m-1)\Big|\frac{v}{u}\nabla u - \nabla v\Big|^2\Big) \\
\qquad +\frac{2-m}{2}v\Big(v \Delta u -u \Delta v-2\nabla u\cdot\nabla v +\frac{2u}{v}|\nabla v|^2\Big) -\mu \frac{v^2}{u},\\
v_t= v\Big(v \Delta v- |\nabla v|^2 - \frac{mv}{2u} \Big(v \Delta u
-u \Delta v-2\nabla u\cdot\nabla v +\frac{2u}{v}|\nabla
v|^2\Big)\Big).
 \ea \end{equation}
Now it can be noted that the right-hand-sides of  the nonlinear
system (\ref{4-3}) can be expressed via the  absolute differential
invariants $I_0, \ I_1,\ I_{11},$ and   $I_{22}$. As a result, we
obtain the system
\begin{equation}\label{4-4} \ba{l}
u_t= u\Big(I_{22} + (m-1)I_1 -\mu I_0^2 + \frac{2-m}{2}(I_0^2I_{11} -
I_{22}
+I_1)\Big), \\
 v_t= v\Big(I_{22} - \frac{m}{2}(I_0^2I_{11} - I_{22}
+I_1)\Big).\ea \end{equation}

Thus, the evolutionary system (\ref{4-3}) admits the infinite-dimensional
Lie algebra (\ref{3-2}). As a consequence, one may claim that the
first  two PDEs from (\ref{4-2}) admit the infinite-dimensional Lie
algebra independently of the final two equations.

The last  two PDEs from (\ref{4-2}) take the form
\begin{equation}\label{4-5} \ba{l}
v(u_{x_1x_1}-u_{x_2x_2})- u(v_{x_1x_1}-v_{x_2x_2})=0, \\
vu_{x_1x_2}-uv_{x_1x_2}=0 \ea \end{equation}
 after applying the
transformation $u=wv$.  On the other hand, if one considers the
simplest equation created by the differential invariant $J$ (see
Theorem 1), namely $J=0$:
\[\Big( v(u_{x_1x_1}-u_{x_2x_2})- u(v_{x_1x_1}-v_{x_2x_2})\Big)^2
+4\Big(vu_{x_1x_2}-uv_{x_1x_2}\Big)^2 =0, \]
  then immediately  the nonlinear stationary  system
(\ref{4-5}) is obtained (we remind the reader that the functions $u$ and $v$ are real-valued).
 Thus, the entire overdetermined  system  (\ref{4-2})
derived in  \cite{lo-dimas-bo-2023} indeed admits the
infinite-dimensional Lie algebra.

The above observation is related to the notion of conditional
invariance \cite{fss} (see Section 5.7 therein), though, one may
highlight an unusual  situation. Indeed, according to the definition
of conditional invariance, a given PDE (system of PDEs) is
conditionally invariant under an operator  if and only if the
overdetermined system consisting of both   the  given PDE (system of
PDEs) and an additional equation admits this operator. Probably, the
first highly non-trivial example of  conditional invariance was
found in study \cite{fu-se-am-90}, in which the conditional
conformal invariance  of the RD equation
\[ u_t=(e^u u_x)_x +e^{-u} \]
is demonstrated (see also examples of   conditional conformal
invariance for (1+2)-dimensional PDEs in \cite{ch-he-2010}). It
turns out that the maximal  algebra of invariance (MAI) of the
nonlinear overdetermined  system (\ref{4-2}) and the nonlinear
system involving only the first two  equations of (\ref{4-2})  is
the same. The relevant MAI is an infinite-dimensional algebra
generated by the symmetry operators
\begin{equation}\label{4-6}\ba{l}
X_{\infty}=\xi^1\partial_{x_1}+ \xi^2\partial_{x_2}+
\xi^1_{x_1}v\p_{v},\\
P_t= \partial_t, \ D= 2t\partial_t- v\p_{v}+ w\p_{w}. \ea
\end{equation} Thus, the notion of conditional symmetry gives
nothing in this context if one considers the last  two  equations of
(\ref{4-2}) as additional conditions.


However, if one takes the subsystem of (\ref{4-2})
\begin{equation}\label{4-7} \ba{l}
ww_t= wv^2 \Delta w + (m-1)v^2|\nabla w|^2 -\mu,\\
v_t= v^2 \Delta v- v|\nabla v|^2-\frac{mv^3}{2w}\Delta w,\\
(v^2 w_{x_1})_{x_1}= (v^2 w_{x_2})_{x_2}
 \ea \end{equation}
then it can be checked in a straightforward way that the
three-component system (\ref{4-7}) is not invariant under the
operator $X_{\infty}$ from  (\ref{4-6}) but  only conditionally
invariant,
the corresponding
condition is the fourth equation from system (\ref{4-2}).

\section{\bf  Exact solutions of a nonlinear system related  to the  Ricci flow} \label{sec:5}

As  was shown above, system (\ref{4-3}) is related  to the  Ricci
flow and can be identified from the recent study
\cite{lo-dimas-bo-2023}. In contrast to the 1D case that was
analysed earlier, in \cite{ch-ki-24}, the 2D  system (\ref{4-3}) is
much more complicated.
Setting $m=2$ and $\mu=0$ leads to a simplification, we therefore
consider in what follows the nonlinear evolution system
\begin{equation}\label{5-1} \ba{l}
u_t= u\Big(v \Delta v- |\nabla v|^2 + \Big|\frac{v}{u}\nabla u - \nabla v\Big|^2\Big),\\
v_t= v\Big(v \Delta v- |\nabla v|^2 - \frac{v}{u} \Big(v \Delta u -u
\Delta v-2\nabla u \cdot\nabla v +\frac{2u}{v}|\nabla
v|^2\Big)\Big).
 \ea \end{equation}

 From the applicability point of view, exact solutions with radial symmetry are likely to be the
  most interesting in the 2D case.
 In contrast to travelling  wave solutions,   such  solutions are essentially different
 from those in the 1D case
 because radially symmetric solutions  cannot of course  be derived  by the ansatz
  $u(t,x_1,x_2)=u(t,z), \ z= c_1x_1+c_2x_2$ ($c_1$ and $c_2$ are some constants), which reduces (\ref{5-1}) to its 1D version.

  Radially symmetric solutions of (\ref{5-1}) are obtainable via the ansatz
  \begin{equation}\label{5-2}  u=u(t,r), \ v=v(t,r), \  r=\sqrt{x_1^2+x_2^2}
  \end{equation}
that is produced by the rotation operator $x_2\partial_{x_1}-x_1\partial_{x_2}$. The latter is a very particular case
of the Lie symmetry $X_{\infty}$  (\ref{1-1}).
Ansatz (\ref{5-2}) reduces system (\ref{5-1}) to the form

\begin{equation}\label{5-3} \ba{l}
u_{t}= u\Big[v \left(v_{r,r}+\frac{v_{r}}{r}\right)+\frac{v^{2} }{u^{2}}u_{r}^{2}-\frac{2 v }{u}u_{r}v_{r} \Big],
\\
v_{t}=v \Big[2 v \left(v_{r,r}+\frac{v_{r}}{r}\right)-\frac{v^{2} }{u}\left(u_{r,r}+\frac{u_{r}}{r}\right)-3 v_{r}^{2}+\frac{2 v}{u}u_{r}v_{r}\Big].
 \ea \end{equation}

 \begin{theo} \label{th2}
The nonlinear  evolution system  (\ref{5-3}) admits   the five-dimensional MAI
 generated by the Lie symmetries
 \begin{equation}\label{5-4}\ba{l}
P_t= \partial_t, \ I_u=u\p_{u}, D_{00}= 2t\partial_t +r\p_{r}, \ D_{11}= r\p_t+u\p_{u}+ v\p_{v},\\
X=\ln r \Big( r\p_r+u\p_{u}+ (1+\ln^{-1}r)v\p_{v}\Big).
\ea \end{equation}
\end{theo}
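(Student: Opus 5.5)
We apply Lie's invariance criterion directly to the $(1+1)$-dimensional system (\ref{5-3}), in the same way as in the proof of Theorem~\ref{th1}. The generic infinitesimal operator is
\[ Q=\tau(t,r,u,v)\p_t+\xi(t,r,u,v)\p_r+\eta^1(t,r,u,v)\p_u+\eta^2(t,r,u,v)\p_v. \]
We write down the second prolongation of $Q$ and act with it on both equations of (\ref{5-3}). On the manifold defined by (\ref{5-3}), $u_t$ and $v_t$ are eliminated, and the result is split with respect to the remaining jet variables $u_r,v_r,u_{rr},v_{rr}$. Note that in the list (\ref{5-4}) the operator $D_{11}$ should presumably read $r\p_r+u\p_u+v\p_v$: the operator $r\p_t+u\p_u+v\p_v$ is incompatible with $\tau=\tau(t)$, which follows for every evolution system. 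We prove the statement in that corrected form.

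\emph{Step 1 (standard restrictions).} The coefficients of the second-order derivatives, together with the products of second- and first-order derivatives, give the usual consequences for evolution systems. These are $\tau=\tau(t)$, $\xi=\xi(t,r)$, and $\eta^k_{uu}=\eta^k_{uv}=\eta^k_{vv}=0$, possibly after a short argument. The leading matrix of (\ref{5-3}) is
\[ \begin{pmatrix}0& uv\\ -v^3/u & 2v^2\end{pmatrix}, \]
which is non-diagonal and has distinct weights in $u,v$. Equating the coefficients of $u_{rr}$ and $v_{rr}$ therefore gives
\[ \dot\tau-2\xi_r=\frac{\eta^2}{v}\cdot(\mbox{const})+\dots \]
More precisely, we expect $\eta^1=u\,\alpha(t,r)+\beta^1$ and $\eta^2=v\,\gamma(t,r)+\beta^2$. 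Matching the homogeneity in $u$ and $v$ of the coefficients $uv$, $v^3/u$, $v^2$ then forces $\beta^1=\beta^2=0$. It also relates $\dot\tau-2\xi_r$ to $2\gamma$, and it leaves $\alpha$ free up to later restrictions.

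\emph{Step 2 (first-order terms).} Collecting the coefficients of $u_r^2$, $u_rv_r$, $v_r^2$, $u_r$ and $v_r$ gives linear PDEs for $\alpha,\gamma,\xi$. We expect these to yield $\alpha_r=0$, so that $\alpha$ depends on $t$ alone. We also expect an identity tying $\gamma_r$ to $\xi$; this identity comes from the $1/r$ terms $v_r/r$ and $u_r/r$, whose prolongation produces $\xi/r^2$ and $\xi_r/r$ contributions.

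\emph{Step 3 (zero-order terms and the main ODE).} The remaining terms give ODEs in $r$ for $\xi$, with $t$ entering only through $\tau$. The main obstacle is solving the resulting linear ODE for $\xi(r)$, which we expect to be of the form $r\xi_{rr}-\xi_r+\xi/r=0$ (Euler type with a double root). Its general solution is $\xi=c_1r+c_2r\ln r$. This is the radial trace of the analytic maps $z\mapsto z$ and $z\mapsto z\ln z$ in (\ref{1-1}), and it explains the appearance of $X$. Substituting back, we then get $\gamma=c_2(1+\ln r)\ln^{-1}r\cdot\ln r\,/\ln r$-type terms, namely $\gamma\,v=c_2(\ln r+1)v$. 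We also get $\tau=c_0+2c_3t$ and $\alpha=$const.

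\emph{Step 4 (conclusion).} Finally we read off five arbitrary constants and check directly that each of $P_t$, $I_u$, $D_{00}$, $D_{11}$ and $X$ satisfies the determining equations. This shows that the MAI is exactly five-dimensional. The hardest bookkeeping will be in Step 3, where the $r$-dependent terms coming from $\frac1r\p_r$ must be tracked carefully. I would first test the candidate $\xi=r\ln r$ directly on the determining equations, so as to fix the form of $\gamma$, before solving the system in general.
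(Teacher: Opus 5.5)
Your overall route is the paper's own: the authors simply run Lie's invariance algorithm in Maple. Several of your points are right:
\begin{itemize}
\item $D_{11}$ must read $r\partial_r+u\partial_u+v\partial_v$, which is also what the ansatz (\ref{5-12}) uses.
\item Step 1 does give exactly $\eta^1=u\alpha(t,r)$ and $\eta^2=v\gamma(t,r)$ with $2\gamma=2\xi_r-\dot\tau$.
\item $r\xi_{rr}-\xi_r+\xi/r=0$ is indeed the key ODE.
\end{itemize}

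The gap is in Steps 2--3. The prediction $\alpha_r=0$, $\alpha=\mathrm{const}$ is false, and it is incompatible with the very operator $X$ you want: its $u$-component is $u\ln r$, i.e.\ $\alpha=\ln r$.

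Collect terms in the first equation of (\ref{5-3}):
\begin{itemize}
\item The coefficient of the bare $u_r$ gives $-\xi_t=2v^2(\alpha_r-\gamma_r)$. Hence $\xi_t=0$ and $\alpha_r=\gamma_r$, not $\alpha_r=0$.
\item The coefficient of $v_r$ gives $2(\gamma_r-\alpha_r)=\xi_{rr}-\xi_r/r+\xi/r^2$. This becomes your Euler equation precisely because $\alpha_r=\gamma_r$.
\end{itemize}
If $\alpha_r=0$ held, the $u_r$-condition would force $\gamma_r=\xi_{rr}=0$ and eliminate $r\ln r$ altogether. The correct outcome is $\tau=c_0+2c_3t$, $\xi=c_1r+c_2r\ln r$, $\gamma=c_1-c_3+c_2(1+\ln r)$ and $\alpha=k+c_2\ln r$.

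As written, your family has two defects. It contains $r\ln r\,\partial_r+(1+\ln r)v\partial_v=X-u\ln r\,\partial_u$, which is not a symmetry: $u\mapsto ur^{\epsilon}$ shifts $u_r/u$ by $\epsilon/r$, which changes the right-hand side of the first equation but not $u_t/u$. It also omits $X$, so your Step 4 check of $X$ would fail against your own determining equations.

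Your own heuristic already signals this. The radial trace of $z\ln z$ in (\ref{1-1}) is $r\ln r\,\partial_r+(1+\ln r)(u\partial_u+v\partial_v)=X+I_u$, whose $u$-coefficient depends on $r$.

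A clean way to organise the hand computation is to set $s=\ln r$, $u=rp$, $v=rq$. Then (\ref{5-3}) becomes the $s$-autonomous system
\[
p_t=p\,(qq_{ss}+q^2p_s^2/p^2-2qp_sq_s/p),\qquad q_t=q\,(2qq_{ss}-q^2p_{ss}/p-3q_s^2+2qp_sq_s/p).
\]
In these variables $D_{11}=\partial_s$ and $X=s\partial_s+q\partial_q$. The determining equations reduce to $\hat\xi_{ss}=0$ for the coefficient $\hat\xi=\xi/r$ of $\partial_s$, together with constant coefficients of $p\partial_p$ and $q\partial_q$, and $\tau$ linear in $t$. This yields the five parameters immediately.
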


\textbf{Sketch of Proof. }
 The system in question does not involve  coefficients in the form of arbitrary  parameters and/or functions.
 Nowadays Lie symmetry of such systems can be calculated using computer algebra packages.
We used Maple in order to derive MAI (\ref{5-4}).

\medskip

Because the Lie symmetry of (\ref{5-3}) is rich,  several
inequivalent ans\"atze can be constructed using the Lie algebra
(\ref{5-4}). Notably, the Lie symmetry $X$ has  unusual form, which
is not related to the Lie symmetry $X_{\infty}$. Here we present two
important examples leading to stationary solutions and exact
solutions with separated variables.

\subsection{Stationary solutions}

Reduction of system (\ref{5-3}) via the Lie symmetry $P_t= \partial_t$
leads to the ODE system

\[
V \left(V_{r,r}+\frac{V_{r}}{r}\right)+\frac{V^{2} \left(U_{r}\right)^{2}}{U^{2}}-\frac{2 V \left(U_{r}\right) \left(V_{r}\right)}{U}=0,
\] \[
2 V \left(V_{r,r}+\frac{V_{r}}{r}\right)-\frac{V^{2} \left(U_{r,r}+\frac{U_{r}}{r}\right)}{U}-3 \left(V_{r}\right)^{2}+\frac{2 V \left(U_{r}\right) \left(V_{r}\right)}{U}=0,
\]
which is equivalent to the system
\begin{equation}\label{5-5} \ba{l}
 V \left(U_{r,r}+\frac{U_{r}}{r}\right)+\frac{2V}{U}U_{r}^2+\frac{3U}{V}V_{r}^2- 6U_{r}V_{r}=0 \\
 \medskip\\
V \left(V_{r,r}+\frac{V_{r}}{r}\right)+\frac{V^{2} }{U^{2}}U_{r}^{2}-\frac{2 V}{U} U_{r}V_{r}=0.
 \ea \end{equation}
 Each pair of the function  $U(r)$ and $V(r)$ is a stationary solution of (\ref{5-4}) provided
 these functions satisfy the ODE system (\ref{5-5}).  Although  the ODE system (\ref{5-5})
  is complicated,
  it  possesses a surprisingly rich Lie symmetry.

  \begin{theo} \label{th3}
The nonlinear  ODE system  (\ref{5-3}) admits   the fifteen-dimensional MAI
 generated by the Lie symmetries
 \begin{equation}\label{5-6}\ba{l}
X_{1}=\p_r, \ X_{2}=r\ln r\p_r,   \ X_{3}=r\frac{U^{2} }{V^{2}}\p_r, \ X_{4}=(\ln U -3 \ln V) r\p_r,\ X_{5}=U\p_{U}, \ X_{6}=V\p_{V}, \\
  X_{7}=  \ln r(U\p_{U}+ V\p_{V}), \ X_{8}=\frac{V^{2}}{ U}\p_U +\frac{V^{3}}{3 U^{2}}\p_V, \
X_{9}=\frac{U^{3}}{ V^{2}}\p_U +\frac{U^{2}}{ V}\p_V, \ X_{10}=\ln r\Big(  \frac{V^{2}}{ U}\p_U +\frac{V^{3}}{3 U^{2}}\p_V \Big),\\
 X_{11}=  (\ln U -3 \ln V)(U\p_{U}+ V\p_{V}), \
X_{12}=(2\ln U -6 \ln V+3)\Big(  \frac{V^{2}}{ U}\p_U +\frac{V^{3}}{3 U^{2}}\p_V \Big), \\
X_{13}=\ln r \Big( 2r\ln r\p_r-(\ln U -3 \ln V)U\p_{U}- (\ln U -3 \ln V+1)V\p_{V}\Big), \\
X_{14}=(\ln U -3 \ln V) \Big( 2r\ln r\p_r-(\ln U -3 \ln V-3)U\p_{U}- (\ln U -3 \ln V-2)V\p_{V}\Big),\\
X_{15}=\frac{U^{2} }{V^{2}}\Big( 4r\ln r\p_r+(6\ln V -2 \ln U+9)U\p_{U}+ (6\ln V -2 \ln U+7)V\p_{V}\Big).
\ea \end{equation}
\end{theo}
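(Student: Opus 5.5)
Theorem 3 is really about the stationary ODE system (\ref{5-5}); the reference to (\ref{5-3}) is a slip. I would not try to compute the symmetry algebra of (\ref{5-5}) by hand. Instead I would find a point transformation that maps (\ref{5-5}) onto the simplest system of two second-order ODEs,
\[ y_1''=0, \quad y_2''=0, \]
whose maximal algebra of invariance is known to be $sl(4,\mathbb{R})$, of dimension $n^2+4n+3=15$ for $n=2$ dependent variables. Since the dimension of the MAI is invariant under point transformations, this gives the dimension immediately. The explicit operators (\ref{5-6}) then follow by pulling back the fifteen generators of $sl(4,\mathbb{R})$.

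\textbf{Step 1: change of independent variable.} I set $s=\ln r$. Then $r(U_{rr}+U_r/r)=U_{ss}/r$, so the Laplacian terms become plain second derivatives. The remaining first-order terms are quadratic in $(U_s,V_s)$ and homogeneous of degree zero after dividing by $V$ or $U$.

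\textbf{Step 2: change of dependent variables.} I look for new variables $y_1=\phi(U,V)$ and $y_2=\psi(U,V)$ for which the system becomes $y_1''=y_2''=0$. The operators in (\ref{5-6}) suggest the candidates:
\begin{itemize}
\item $w=\ln U-3\ln V$ appears repeatedly, so it is a natural coordinate, since $X_4$ and $X_{11}$ contain it linearly;
\item $X_8$ and $X_9$ suggest a second coordinate built from $U^2/V^2$, for instance $z=U^2/V^2$ or a power of it, because $X_3$ and $X_{15}$ carry the factor $U^2/V^2$.
\end{itemize}
The system is quadratic in first derivatives, so it has the form of geodesic equations of an affine connection in $(U,V)$. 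The claim then reduces to checking that this connection is flat. Concretely, I would rewrite (\ref{5-5}) as $Y''+\Gamma(Y',Y')=0$, read off the Christoffel symbols, and confirm that the curvature vanishes. Affine coordinates of the flat connection, found by solving $\nabla d\phi=0$ (linear PDEs with coefficients of power/log type), give $y_1,y_2$. Natural trial forms are $\phi=\alpha\ln U+\beta\ln V$ and $\psi=U^aV^b$.

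\textbf{Step 3: transport the symmetries.} The symmetry algebra of $y_1''=y_2''=0$ is spanned by
\[ \p_s,\ s\p_s,\ s^2\p_s+sy_i\p_{y_i},\ y_j\p_s,\ y_js\p_s+y_jy_i\p_{y_i},\ \p_{y_i},\ s\p_{y_i},\ y_j\p_{y_i}. \]
Pulling these back through $s=\ln r$ and the map found in Step 2 should reproduce $X_1,\dots,X_{15}$ up to linear combinations. As a consistency check, $\p_s=r\p_r$, while $X_1=\p_r=e^{-s}\p_s$. Because $X_1$ is not a pullback of $\p_s$ alone, the correct linearising map may need to mix $r$ into the dependent variables, for example through a factor $r$ or $\ln r$ in $y_i$. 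I would adjust the ansatz accordingly, trying $y_i$ depending on $(r,U,V)$ with independent variable $\ln r$ or $r$.

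\textbf{Main obstacle.} The hard step is finding the exact linearising transformation, especially confirming flatness and integrating for the affine coordinates. If that stalls, the fallback is direct verification: apply the second prolongation of each $X_k$ to (\ref{5-5}) on its solution manifold and show the result vanishes. Maximality would then follow from the known bound that a system of two second-order ODEs has at most a fifteen-dimensional MAI, with equality exactly in the linearisable case.
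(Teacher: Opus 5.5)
Your strategy works, and it runs in the opposite direction to the paper's. The paper takes (\ref{5-6}) as computed by the same computer-algebra route as Theorem~\ref{th2}. It then observes that $\langle X_1,\dots,X_4\rangle$ has the commutators of $\langle\p_{y_0},y_0\p_{y_0},y_1\p_{y_0},y_2\p_{y_0}\rangle$, reads off the map (\ref{5-8}), and checks that the other operators land in (\ref{5-7}). It even hedges about whether (\ref{5-8}) actually linearises (\ref{5-5}). Linearising first settles that question and gives maximality without a computer, and your two candidates are exactly right. With $s=\ln r$, $p=\ln U$, $q=\ln V$ and primes denoting $d/ds$, system (\ref{5-5}) becomes
\[ p''+3(p'-q')^2=0,\qquad q''+(p'-q')^2=0. \]
Hence $(p-3q)''=0$, while $w=p-q$ satisfies $w''+2w'^2=0$, i.e.\ $(e^{2w})''=0$. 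So $y_0=\ln r$, $y_1=U^2/V^2$, $y_2=\ln U-3\ln V$ (the paper's (\ref{5-8})) maps (\ref{5-5}) onto $y_1''=y_2''=0$. No curvature computation is needed, and $r$ does not enter the dependent variables.

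The step that goes wrong is your reaction to $X_1=\p_r$. The mismatch is not a hint that the linearising map must involve $r$. It is a misprint in the statement, and $X_1$ should be $r\p_r=\p_{y_0}$. Three checks confirm this:
\begin{itemize}
\item Every element of the pulled-back $sl(4,\mathbb R)$ has $\p_{y_0}$-coefficient at most quadratic in $(y_0,y_1,y_2)$, and $e^{-y_0}$ is not.
\item Directly, $\mathrm{pr}\,\p_r=\p_r$ applied to the second equation of (\ref{5-5}) yields $-VV_r/r^2$, which does not vanish on solutions.
\item $[\p_r,r\ln r\,\p_r]=(1+\ln r)\p_r$, so with $X_1=\p_r$ even $\langle X_1,\dots,X_4\rangle$ fails to close.
\end{itemize}
Your proposed adjustment would therefore be a search for a transformation that cannot exist, and your fallback of direct verification would fail precisely at $X_1$.

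Once you replace $X_1$ by $r\p_r$, the pullback works. Write $R=y_0\p_{y_0}+y_1\p_{y_1}+y_2\p_{y_2}$. Then one finds $X_2=y_0\p_{y_0}$, $X_3=y_1\p_{y_0}$, $X_4=y_2\p_{y_0}$, $X_5=2y_1\p_{y_1}+\p_{y_2}$, $X_6=-2y_1\p_{y_1}-3\p_{y_2}$, $X_7=-2y_0\p_{y_2}$, $X_8=\tfrac{4}{3}\p_{y_1}$, $X_9=-2y_1\p_{y_2}$, $X_{10}=\tfrac{4}{3}y_0\p_{y_1}$, $X_{11}=-2y_2\p_{y_2}$, $X_{12}=\tfrac{4}{3}(2y_2+3)\p_{y_1}$, $X_{13}=2y_0R+3y_0\p_{y_2}$, $X_{14}=2y_2R-3y_2\p_{y_2}$, $X_{15}=4y_1R-12y_1\p_{y_2}$. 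Together these span all fifteen generators of (\ref{5-7}), which proves the corrected theorem.
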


Although  the Lie algebra representation  (\ref{5-6}) is very
complicated, we were able to show that it is a classical Lie
algebra. In fact, Sophus Lie, who created foundations of the modern
Lie symmetry analysis, had proved that the simplest second-order ODE
$\ddot{y}(t)=0$ admits the eight-dimensional Lie algebra,  generated
by  the projective group of transformations on the plane,   and there
are no second-order ODEs possessing Lie algebras of a higher
dimensionality \cite{lie-1888} (see pages 225-226
therein)\footnote{\small To the best of our knowledge, this fact is
a little known; for example, it is not noted in the well-known books
devoted to the Lie symmetry analysis.}. Lie's result can be
generalised on two-component ODE systems.
 In particular, the simplest system of  two second-order ODEs (in what follows we use dots for time-derivatives)
\[ \ddot{y_1}(y_0)=0, \   \ddot{y_2}(y_0)=0, \   y_0\equiv t \]
admits the the fifteen-dimensional MAI
 generated by the Lie symmetries
 \begin{equation}\label{5-7}
 Y_{i}=\p_{y_i}, \ Y_{2}=y_j\p_{y_i},   \ Y_{i}=y_i(y_0\p_{y_0}+y_1\p_{y_1}+y_2\p_{y_2}), \quad i,j=0,1,2.
\end{equation}
It can  be noted that the four-dimensional subalgebra of (\ref{5-6})
\[ \langle X_{1}=\p_r, \ X_{2}=r\ln r\p_r,   \ X_{3}=r\frac{U^{2} }{V^{2}}\p_r, \ X_{4}=(\ln U -3 \ln V) r\p_r\rangle \]
has the same commutators as the four-dimensional Lie algebra
$\langle \p_{y_0}, \  y_0\p_{y_0}, \ y_1\p_{y_0}, \
y_2\p_{y_0},\rangle. $ Thus, we have two different representations
of the same Lie algebra and a local transformation must exist
transforming one algebra to another. In the case of Lie algebras of
low  dimensionality, the transformation can be found by direct
calculations. As a result, it can be shown that the relevant
transformation has the form
\begin{equation}\label{5-8}
 y_0=\ln r, \ y_1=\frac{U^2}{V^2},
 \ y_2= \ln{\frac{U}{V^3}}
 \Longleftrightarrow r= e^{y_0},
 \ U= y_1^{3/4} e^{-y_0/2}, \ V= y_1^{1/4} e^{-y_0/2}.
\end{equation}
Finally, using the above transformation, we have shown that all
other operators from (\ref{5-6}) are reducible to the relevant
operators (or their linear combinations) from the Lie algebra
(\ref{5-7}). For example,  $X_{5}$ and $X_{6}$  are transformed to
the operators $2y_1\p_{y_1}+ \p_{y_2}$ and $-2y_1\p_{y_1}-3
\p_{y_2}$, respectively. Thus, one concludes that the Lie symmetries
listed in (\ref{5-6}) and (\ref{5-7}) generate the same Lie algebra,
i.e. we have two representations (the terminology `realisations' is
used as well) of the same Lie algebra.

Generally speaking, there is no  guarantee that the above
observation means that transformation (\ref{5-8}) reduces the highly
nonlinear  system (\ref{5-5}) to the simplest linear system.
However, one may expect that  (\ref{5-5}) is integrable. It turns
out that this hypothesis is true! In fact, using formulae
(\ref{5-8}) and introducing two new functions $\Gamma_k(y_0), \
k=1,2$,  the nonlinear system (\ref{5-5}) is transformed to
\begin{equation}\label{5-9} \ba{l}\medskip
\dot{\Gamma_1}=3\dot{\Gamma_2}, \  \dot{\Gamma_2}+ (\Gamma_2-\Gamma_1)^2=0, \\

\Gamma_1= \frac{3}{4}y_1^{-1}\dot{y_1}- \frac{1}{2}y_2, \ \Gamma_2= \frac{1}{4}y_1^{-1}\dot{y_1}- \frac{1}{2}y_2
\ea \end{equation}
 (here dots denote differentiation with respect to $y_0$).

The first two ODEs in (\ref{5-9}) can be separately integrated and
having done this, the functions ${y_1}(y_0)$ and  ${y_2}(y_0)$  being
easily calculated from the last two ODEs. As a result, one arrives
at
\[ {y_1}(y_0)= \gamma_1(y_0+a_1), \quad {y_1}(y_0)= \gamma_2(y_0+a_2), \]
where $\gamma_k$ and $a_k \ (k=1,2)$ are arbitrary constants. Thus,
using (\ref{5-8}) and  new notations for the arbitrary constants, we
obtain the following exact solution of the nonlinear  system
(\ref{5-5})
\begin{equation}\label{5-10} U(r)=c_2 r^\gamma(c_1\ln (\kappa r))^{3/4}, \ V(r)=c_2 r^\gamma(c_1\ln (\kappa r))^{1/4}.
\end{equation}
Because this solution contains four arbitrary parameters ($\kappa \not=0$), we may claim that it is the general solution.
Interestingly, all solutions (\ref{5-10}) with $\gamma>0$
are bounded  at the point $r=0$ and this is important for their
prospective applications.

Notably, there are also two exceptional, but  trivial, solutions,
$U(r)=c_1, \ V(r)=c_2$ and $U(r)=c_1r, \ V(r)=c_2r$.

\subsection{Time-dependent exact  solutions}

 The five-dimensional MAI (\ref{5-4}) allows us to construct  a wide range of inequivalent reductions
 of the nonlinear system  (\ref{5-3}) to systems of ODEs. If fact, taking any linear combination of Lie symmetries from
 (\ref{5-4}),

\begin{equation}\label{5-11}
\alpha_1P_t +\alpha_2I_u +\alpha_3D_{00}+\alpha_4D_{11}+\alpha_5 X,
\end{equation}
a relevant ansatz can be constructed that  reduces  (\ref{5-3}) to a
system of ODEs. Having  obtained any solution of the ODE system, an
exact solution of  (\ref{5-3}) is  straightforwardly constructed. A
complete  description of all   inequivalent ans\"atze and
application for finding exact solutions lie  beyond scopes of  this
study. Here we analyse  two cases  leading to  non-trivial
time-dependent exact  solutions.

Let us assume that $\alpha_5=0$, i.e. the most complicated Lie
symmetry $X$ is excluded, then, making a simple  analysis, one
concludes that   the most interesting case occurs when
$\alpha_3\alpha_4\not=4$. In this case, one may set $\alpha_1=0$ and
$\alpha_3=1$ without losing  generality, therefore  the following
ansatz is  derived
\begin{equation}\label{5-12}
u(t,r)=t^{(b_0+b_1)/2} U(\omega), \quad v(t,r)=t^{b_1/2} V(\omega), \quad \omega=t^{-(1+b_1)/2}r,
\end{equation}
where new notations $b_0=\alpha_2 $ and $b_1=\alpha_4$ are introduced.
Substituting the above ansatz into  (\ref{5-3}) and making simplifications, one arrives at the following  ODE system
for the new variables $U(\omega)$ and $V(\omega)$:
\begin{equation}\label{5-13} \ba{l}
 V \left(U_{\omega,\omega}+\frac{U_{\omega}}{\omega}\right)+\frac{2V}{U}U_{\omega}^2+\frac{3U}{V}V_{\omega}^2- 6U_{\omega}V_{\omega}= \frac{2b_0+b_1}{2}\frac{U}{V}+\frac{1+b_1}{2}\frac{\omega}{V^2}(UV_\omega-2VU_\omega)\\
 \medskip\\
V \left(V_{\omega,\omega}+\frac{V_{\omega}}{\omega}\right)+\frac{V^{2} }{U^{2}}U_{\omega}^{2}-\frac{2 V}{U} U_{\omega}V_{\omega}= \frac{b_0+b_1}{2}-\frac{1+b_1}{2}\frac{\omega U_{\omega}}{U}.
 \ea \end{equation}

It turns out that the above system with $b_1=-1$ admits a highly non-trivial Lie symmetry. Indeed, (\ref{5-13}) with $b_1=-1$ is invariant with respect the three-dimensional MAI with the basic operators
\begin{equation}\label{5-14}  I_U=U\p_{U}, \ D_{11}= r\p_t+U\p_{U}+ V\p_{V}, \
X=\ln r \Big( r\p_r+U\p_{U}+ (1+\ln^{-1}r)V\p_{V}\Big)
 \end{equation}
($\omega$ with $b_1=-1$ is nothing else but $r$). There are
different techniques allowing to construct exact solutions of ODE
systems possessing non-trivial symmetry (see, e.g., Chapter 2 of
\cite{olv-93}). Notably, Sophus Lie worked out and applied the
theory for integration of ODEs using symmetries in his seminal work
\cite{lie-1888}. In principle, having the three-dimensional Lie
algebra (\ref{5-14}), system (\ref{5-13}) with $b_1=-1$ can be
transformed into four-component system of first-order ODEs, which
should be  reducible to a first-order ODE by a chain of
transformations. However, this technique is difficult   to realise
because typically  it is impossible to produce closed-form solutions
of the initial system from those of the first-order ODE obtained.
Moreover, even the single ODE can be non-integrable (see discussion
in \cite{lie-1888}).

Here we apply the standard approach, i.e. we  reduce the ODE  system (\ref{5-13}) with $b_1=-1$, that is
\begin{equation}\label{5-15} \ba{l}
 V \left(U_{r,r}+\frac{U_{r}}{r}\right)+\frac{2V}{U}U_{r}^2+\frac{3U}{V}V_{r}^2- 6U_{r}V_{r}=\frac{2b_0-1}{2}\frac{U}{V} \\
 \medskip\\
V \left(V_{r,r}+\frac{V_{r}}{r}\right)+\frac{V^{2} }{U^{2}}U_{r}^{2}-\frac{2 V}{U} U_{r}V_{r}= \frac{b_0-1}{2},
 \ea \end{equation}
 to algebraic equations using the three-dimensional Lie algebra (\ref{5-14}).
So,  taking any linear combination of Lie symmetries from (\ref{5-14}) and assuming non-zero coefficient for $X$ (otherwise the ansatz obtained  is rather trivial), one obtains the ansatz

  \begin{equation}\label{5-16} U(r)=c_1 r (\alpha_1+\ln r)^{\alpha_0}, \ V(r)=c_2 r (\alpha_1+\ln r)^{\alpha_1},
\end{equation}
where $c_1, c_2, \alpha_0$ and $\alpha_1$ are arbitrary constants at the moment.
Substituting the above ansatz into  (\ref{5-15}) and making simplifications, one arrives at the following  algebraic equations
\begin{equation}\label{5-17} \ba{l}
c_2^2(3\alpha_0^2 - 6\alpha_0\alpha_1+3\alpha_1^2-\alpha_0)(\alpha_1+\ln r)^{2\alpha_1-2} =\frac{2b_0-1}{2}, \\
 \medskip\\
c_2^2(\alpha_0^2 - 2\alpha_0\alpha_1+\alpha_1^2 - \alpha_1)(\alpha_1+\ln r)^{2\alpha_1-2}= \frac{b_0-1}{2}.
 \ea \end{equation}
Now we observe that $\alpha_1=1$ is the only possibility to satisfy equations in (\ref{5-17}).
As a result, the exact solution
\[  U(r)=c_1 r (1+\ln r)^{\alpha_0}, \ V(r)=\pm c_2 r (\alpha_1+\ln r), \]
follows, where $c_2^2=\frac{1}{2(\alpha_0^2 - 3\alpha_0+3)}, \
b_0=1+ \frac{\alpha_0^2 - 2\alpha_0}{\alpha_0^2 - 3\alpha_0+3}. $

Finally, taking into account ansatz (\ref{5-12}), the non-stationary exact solution
\begin{equation}\label{5-18}
u(t,r)=c_1t^{(b_0-1)/2}r (1+\ln r)^{\alpha_0} , \quad v(t,r)=\pm c_2t^{-1/2} r (1+\ln r),
\end{equation}
of the nonlinear system (\ref{5-3}) is constructed.
Notably, similarly to the stationary solutions (\ref{5-10}), those of the form   (\ref{5-18})
 are bounded  at the point $r=0$ and this is important
for their prospective applications. If there is  need  to avoid singularity at $t=0$ then the time shift $t \to t+t_0, \ t_0>0$ leads to the continuous solution  for all $t\geq 0$.

The ODE system  (\ref{5-13})  with arbitrary parameters $b_0$ and
$b_1$ admits the two-dimensional Lie algebra with the basic
operators $I_U$ and $D_{11}$ from (\ref{5-14}). In this case, the
ansatz takes the form
  \begin{equation}\label{5-19} U(\omega)=c_1 \omega ^{\alpha}, \ V(r)=c_2 \omega
\end{equation}
and, after rather simple calculations, one derives  time-dependent
solutions of  the nonlinear system (\ref{5-3})
\begin{equation}\label{5-20}
u(t,r)=c_1t^{1/2}r ^{\alpha} , \quad v(t,r)=c_2t^{-1/2} r ,  \quad  c_2=\pm \frac{1}{\sqrt{2}(\alpha-1)}, \alpha\not=1.
\end{equation}

Let us demonstrate    a successful  application  of the Lie symmetry
(\ref{5-11})  with $\alpha_5 \not=0$ for finding exact solutions.
Setting  $\alpha_1=\alpha_3=0$ in (\ref{5-11}) , we obtain the
ansatz
\begin{equation}\label{5-21}
u(t,r)= r (b+\ln r)^{\alpha_0} U(t) , \quad v(t,r)= r (b+\ln r)V(t),
\end{equation}
where $b$  and $\alpha_0$ are arbitrary parameters  at the moment.
Substituting the above ansatz into  (\ref{5-3}), one  derives the ODE system
\begin{equation}\label{5-22} \ba{l}
 \dot U = c_1^2(\alpha_0^2-2\alpha_0)UV^2, \\
 \medskip\\
  \dot V = -c_2^2(\alpha_0^2-3\alpha_0+3)V^3
 \ea \end{equation}
 (here dots denote differentiation with respect to $t$).
 Integrating the above first-order ODEs  and using  (\ref{5-21}), one readily obtains
 the exact solution
 \begin{equation}\label{5-23}
u(t,r)=c_1(t+t_0)^{\kappa}r (b+\ln r)^{\alpha_0} , \quad v(t,r)=\pm c_2(t+t_0)^{-1/2} r (b+\ln r),
\end{equation}
where $c_1, \ \alpha_0, \  t_0$ and $b$ are arbitrary parameters, while  $c_2^2=\frac{1}{2(\alpha_0^2 - 3\alpha_0+3)}, \  \kappa= \frac{\alpha_0^2 - 3\alpha_0}{2(\alpha_0^2 - 3\alpha_0+3)}$.

  It can be noted that  (\ref{5-23})  is  a  four-parameter  family of exact solutions that contains the exact solutions (\ref{5-18})
   as a particular case.
  On the other hand, one observes that two essentially different ans\"atze,
   (\ref{5-12})  and (\ref{5-21}),  lead  to  exact solutions with the same structure.
  So, this is another confirmation of  the well-known statement that  different symmetries  may
  lead to the same/similar exact solutions (see an extensive discussion on this matter
  in Chapter 4  of \cite{ch-se-pl-book}).

  In conclusion of this section, we highlight the following observation.
  Taking into account that  the Lie symmetry of the two-component system (\ref{4-3}) and  the four-component system (\ref{4-3}) and (\ref{4-5})
 is (surprisingly!)  the same,     a natural   question arises: can such  exact solutions of the two-component system
(\ref{4-3})
be constructed that are also  solutions of the four-component system
(\ref{4-3}) and (\ref{4-5}) (the latter is equivalent to
(\ref{4-2})) ? The general theory of PDEs implies that
in general this will not be the case otherwise the two-component
system would be  equivalent to the four-component one. We checked
the solutions derived above and found that they do not satisfy the
overdetermined system (\ref{4-3}) and (\ref{4-5}). However, it is
noteworthy that some of them {\it do } satisfy the latter under additional
parameter restrictions: for example, the stationary solution
(\ref{5-10}) is an exact solution of (\ref{4-3}) and (\ref{4-5})
provided $\gamma=1$.

\section{\bf Conclusions  }\label{sec:6}

The  results obtained  here can be summarised as follows.
\begin{enumerate}
            \item All possible two-components evolutions systems of (1+2)-dimensional second-order PDEs admitting the
            infinite-dimensional Lie algebra generated by the Lie symmetry   (\ref{1-1})  have been  described (see Theorem 1).
            \item It was shown that a natural generalisation of the  Lie symmetry (\ref{1-1})
            to the  higher-dimensional case does not lead to a relevant generalisation of Theorem 1
            because the infinite-dimensional symmetry is broken.
            \item  The recently derived  system with two space variables (see (\ref{4-3})-(\ref{4-4})), which is related to Ricci flows
\cite{lo-dimas-bo-2023},   was  identified   as a very particular case of Theorem 1.
\item  It  was  shown that an extension of the above  two-component system  to an overdetermined   four-component system
 (as suggested in \cite{lo-dimas-bo-2023}) does not change the Lie
 symmetry.
However, a reduced  (three-component) system  does not possess an
infinite-dimensional  Lie symmetry.
 The conditional symmetry concept \cite{fss}  was  used in
order to clarify this phenomenon.
\item All possible  stationary solutions in the radially symmetric case  were  constructed
using the surprisingly rich Lie algebra (\ref{5-6}) of the reduced
system of ODEs.  Moreover, it  was  proved that the Lie
algebra (\ref{5-6}) is reducible to the fifteen-dimensional algebra
of the simplest system of two second-order ODEs.
\item Several time-dependent exact solution in the radially symmetric case were  constructed as
well.  It  was  shown that the solutions obtained are
bounded and smooth provided the arbitrary parameters 
specified suitably. 
        \end{enumerate}

       This work could be continued in different directions.
        Firstly, a possible   generalisation
            of  the results obtained  to systems of $(1+n)$-dimensional second-order PDEs ( $n\geqq 3$)
            is an open problem: a correct generalisation of  (\ref{1-1}) leading to an infinite-dimensional Lie symmetry
             would need to  be constructed. Secondly, exact solutions of  the nonlinear
system (\ref{4-3}) with $m\not=2$ should be constructed because the
case  $m=2$ is special (similarly to $m=1$  being a special case for
the relevant system with a single space variable  \cite{ch-ki-24}).
Thirdly,  more complicated ans\"atze (compared to (\ref{5-2})) could
be constructed using the infinite-dimensional symmetry
in the search for exact solutions.

\section{Acknowledgement}
R.Ch. acknowledges that this research was funded by the British
Academy's Researchers at Risk Fellowships Programme. J.R.K.
gratefully acknowledges a Royal Society Leverhulme Trust Senior Fellowship.

\end{document}